\documentclass[letterpaper, 10 pt, conference]{ieeeconf} %

\IEEEoverridecommandlockouts %
\usepackage{subcaption}
\usepackage{booktabs}
\usepackage{url}
\usepackage[hidelinks]{hyperref}
\usepackage{cite}
\usepackage{amsmath, amssymb, amsfonts, stmaryrd}
\usepackage{algorithmic}
\usepackage{graphicx}
\usepackage{textcomp}
\usepackage{xcolor}
\def\BibTeX{{\rm B\kern-.05em{\sc i\kern-.025em b}\kern-.08em T\kern-.1667em\lower.7ex\hbox{E}\kern-.125emX}}
\newtheorem{thm}{Theorem}%
\newtheorem{proposition}{Proposition}%
\newtheorem{defn}{Definition}%
\newtheorem{rem}{Remark}%
\newtheorem{assumption}{Assumption}%

\usepackage[ruled, vlined, linesnumbered]{algorithm2e}
\usepackage{makecell}

\usepackage{enumerate}

\newcommand{\set}[1]{\left\{#1\right\}}

\newcommand{\ra}{\rightarrow}
\newcommand{\Real}{\mathbb{R}}

\renewcommand{\subset}{\subseteq}

\newcommand{\LTL}{\mathrm{LTL}_{\setminus \bigcirc}}
\newcommand{\Sm}{\mathcal{S}_m}
\newcommand{\Gm}{\mathcal{G}_m}
\newcommand{\Am}{\mathcal{A}_m}
\newcommand{\Tm}{\mathcal{T}_m}
\newcommand{\lbl}[1]{\llbracket #1 \rrbracket}
\newcommand{\Win}{\operatorname{Win}}
\newcommand{\len}{\operatorname{len}}

\newcommand{\C}{\mathcal{C}}
\newcommand{\Cm}{\mathcal{C}^{m}}
\newcommand{\D}{\mathcal{D}}
\newcommand{\R}{\mathbb{R}}
\newcommand{\X}{\mathcal{X}}
\newcommand{\Wo}{\mathcal{W}^1}
\newcommand{\Wm}{\mathcal{W}^1_m}
\newcommand{\Xm}{\mathcal{X}_m}
\newcommand{\xs}{x_{m}^{\star}}

\newcommand{\Sl}{\Phi^{\mathrm{sol}}}
\newcommand{\xe}{x_{\mathrm{eq}}}
\newcommand{\hwin}{\widehat{\operatorname{Win}}}
\newcommand{\hwint}{\widehat{\Win_{\sigma}}(\Tm)}

\newcommand{\Ubox}{\mathcal{U}_{\mathrm{box}}}

\newcommand{\yt}[1]{{\color{black} #1}} 
\newcommand{\rz}[1]{{\color{black} #1}}

\title{\LARGE \bf Patching Control Lyapunov Barrier Functions for Temporal Logic
Specifications with Bounded Controls }

\author{Ruikun Zhou*, Yating Yuan*, Haocheng Chang, Yinan Li, and Yiming Meng%
\thanks{Ruikun Zhou is with the Department of Aeronautics and Astronautics, Massachusetts Institute of Technology, Cambridge, MA 02139, USA. Email: \texttt{ruikun@mit.edu}.}
\thanks{Yating Yuan, and Haocheng Chang are with the Department of Applied Mathematics, University of Waterloo, Waterloo, Ontario N2L 3G1, Canada. Emails: \texttt{\{yating.yuan, haocheng.chang\}@uwaterloo.ca} }%
\thanks{Yinan Li is with MacLean Engineering \& Marketing Co. Limited, Collingwood, ON, Canada. Email: \texttt{yli@macleanengineering.com}.}
\thanks{Yiming Meng is with the AI Thrust, Information Hub, Hong Kong University of Science and Technology (Guangzhou), China. Email: \texttt{yimingmeng@hkust-gz.edu.cn}.}
}

\begin{document}
\maketitle
\thispagestyle{empty}
\pagestyle{empty}

\begin{abstract}
We propose an abstraction-free framework for controller synthesis for continuous-time dynamical systems subject to Linear Temporal Logic (LTL) specifications and bounded control inputs. The proposed method combines the sequential decomposition of LTL tasks with the use of formally certified Control Lyapunov-Barrier Functions (CLBFs). By formulating local specifications as a sequence of safe-stabilization problems, we systematically approximate and patch the winning sets of the decomposed subtasks. The satisfaction of these local constraints is guaranteed by the offline-computed level sets of the CLBFs. As a result, our framework yields formally verified switching feedback controllers that enable efficient online planning and dynamic re-planning. This ensures robust continuous specification satisfaction in the presence of state perturbations, avoiding the explicit state-space abstractions commonly required in the literature. The approach is validated through numerical simulations and a hardware demonstration on a Crazyflie quadrotor.
\end{abstract}
\begin{keywords}
    Linear temporal logic; Control Lyapunov function; Control barrier function; Reach-avoid; Mobile robots.
\end{keywords}

\section{Introduction}
Linear Temporal Logic (LTL) serves as a rigorous formal language for specifying temporally structured mission requirements in robotics and autonomous systems, such as sequencing, repeated visitation, conditional task execution, and safety constraints over long horizons~\cite{clarke1999model}. It is therefore well-suited to motion planning and control problems that require the satisfaction of complex task specifications over time, and has been widely used in a range of robotic applications, such as autonomous driving, surveillance robots, and firefighting robots~\cite{luckcuck2019formal}. One of the central challenges in realizing LTL specifications in robotic systems is to design controllers for long-horizon tasks while accounting for dynamics and safety constraints \cite{kressgazit2018synthesis}.

Abstraction-based methods constitute a classical line of work for LTL control, where continuous systems are represented by finite symbolic models and controllers are synthesized to satisfy LTL specifications~\cite{kloetzer2008fully}. Building on this abstraction-based paradigm, reactive mission and motion planning frameworks have been developed to provide formal guarantees for high-level robotic tasks under admissible environment assumptions~\cite{kressgazit2009temporal}. A key step in such abstraction-based synthesis frameworks is the construction of finite abstractions of the continuous dynamics, with robustness margins used to account for mismatches between the control system and its finite abstractions~\cite{liu2016finite}. This line of work was further extended to robustly complete control synthesis for nonlinear systems through an interval branch-and-bound scheme with adaptive refinement~\cite{li2018rocs}. However, finite-abstraction-based synthesis can be computationally demanding due to the burden of abstraction construction and state-space explosion.

To address these limitations, another line of work pursues abstraction-free control synthesis under LTL specifications using control barrier functions (CBFs). A typical methodology is to decompose the LTL specification into sequential subtasks and then use CBF-based quadratic programs (CBF-QPs) to synthesize controllers~\cite{niu2020control}, but such methods still have technical restrictions. Notably, CBF-QP-based methods typically require online optimization to compute control inputs, and conflicts among barrier-function constraints may lead to infeasible QPs. To address this issue, methods for the composition of barrier functions and a prioritization-based control method were proposed to guarantee the feasibility of the controller~\cite{srinivasan2020barrier}. Nevertheless, these approaches still typically rely on online optimization to compute the control input. Alternatively, hybrid barrier certificates establish satisfaction of syntactically co-safe LTL specifications through a hybrid-system eventuality argument~\cite{bisoffi2018hybrid}. This approach avoids the standard CBF-QP formulation, but it does not explicitly incorporate bounded-control constraints.

On the other hand, recent progress on the compatibility of CBFs and CLFs shows that one can directly synthesize a single control Lyapunov barrier function (CLBF) for safe stabilization tasks \cite{dai2024verification, liu2025computing, liu2025formal}. More importantly, the level sets of CLBFs depict the invariant set where both safety and convergence are guaranteed. Therefore, in the scenarios where CLBFs are accessible and the LTL specification can be decomposed into a set of reach-avoid tasks, those level sets can be leveraged to capture a set of several feasible trajectories for the specifications, potentially without solving QP problems. 

In this work, by leveraging the formally certified CLBFs proposed in \cite{liu2025formal}, we propose an abstraction-free framework that synthesizes controllers and approximates the winning set for the given LTL specifications by approximating and patching the winning sets of decomposed subtasks. The main contributions are threefold, as follows.
\begin{itemize}
    \item To the authors' best knowledge, this is the first work connecting the approximation of the winning set of LTL specifications and the level sets of CLBFs under bounded control.
    \item By formulating the local specification tasks as a sequence of safe stabilization problems, the proposed method yields formally verified switching feedback controllers under bounded control constraints. With the offline-computed CLBFs, this formulation enables efficient online planning under bounded control constraints and allows for dynamic, on-the-fly re-planning that guarantees specification satisfaction in the presence of state perturbations while tracking reference trajectories in noisy environments or with perturbed dynamics.
    \item A set of case studies, including numerical simulations and 
    a real-world experiment in which a Crazyflie quadrotor tracks a trajectory generated by the proposed algorithm, is provided to demonstrate the effectiveness of the proposed method. 
\end{itemize}
    
\section{Preliminaries and Problem Formulation}

Throughout this paper, we consider a control-affine system of the form:
\begin{equation}
    \label{eq:sys}
    \dot{x}= f(x) + g(x)u,
\end{equation}
where $f:\X \subset \mathbb{R}^{n}\to \mathbb{R}^{n}$ and $g:\X \subset \mathbb{R}
^{n}\to \mathbb{R}^{n \times p}$ are locally Lipschitz. The control input
$u$ is constrained by the hyperbox
\begin{equation}
    u \in \Ubox := \left\{ u \in \mathbb{R}^{p}\mid \underline{u}_{j}\le u_{j}
    \le \overline{u}_{j},\; j = 1, \ldots, p \right\},
\end{equation}
where $\underline{u}_{j}, \overline{u}_{j}\in \mathbb{R}$ denote the lower and upper bounds on each control component, respectively, which is consistent with the physical constraints of many systems in practice, such as motor torque limits. We denote the solution of system \eqref{eq:sys} with initial condition $x(0) = x_{0}$ and control input $u$ as $\phi(t, x_{0}, u)$, which is defined on the maximal interval of existence $I \subset [0, \infty)$. In this work, we assume that the maximal interval of existence is $[0, \infty)$ for all $x_{0}\in \X$ and $u \in \Ubox$.

\subsection{State constraints and smooth approximation}
\label{sec:state_constraints} Consider multiple state constraints defined by a finite set of continuously differentiable functions $\{h_{i}\}_{i=1}^{N}$, where $h_{i}:\X \to \mathbb{R}$, $i = 1, \ldots , N$. The safe set $\C_{i}$ associated with $h_{i}$ is defined as $\C_{i}:= \{x \in \X \mid h_{i}(x) \le 1\}$, and the unsafe set $\D_{i}$ is defined as $\D_{i}:= \{x \in \X \mid h_{i}(x) > 1\}$. Apparently, the overall safe region $\C_{\mathrm{safe}}$ is the intersection of all $\C_{i}$, i.e., $\C_{\mathrm{safe}}= \bigcap_{i=1}^{N}\C_{i}$, which can also be written as the point-wise maximum of all $h_{i}$ as follows:
\begin{equation}
\C_{\mathrm{safe}}= \{x \in \X \mid h_{max}\le 1\},
\end{equation}
where $h_{max}:= \max_{i=1, \ldots, N}h_{i}(x)$ is not a continuously differentiable function in general. The system is said to be safe with respect to the constraints $h_{i}$ if for any initial state $x_{0}\in \C_{\mathrm{safe}}$, there exists a control input $u \in \Ubox$ such that the solution $\phi (t, x_{0}, u)$ remains in $\C_{\mathrm{safe}}$ for all $t \ge 0$. Let the interior of $\C_{\mathrm{safe}}$ be denoted as $\mathrm{Int}(\C_{\mathrm{safe}})$, which is assumed to be nonempty. 
Let $\xe$ be the unique equilibrium point of \eqref{eq:sys} in $\C_{\mathrm{safe}}$, and suppose that $\xe\in \mathrm{Int}(\C)$.
We then define the safe region of attraction (ROA) under a given state-feedback
control policy $u = \kappa(x) \in \Ubox$ as
\begin{align}
    \mathcal{D}_{\mathrm{safe}}:= \Big\{ & x_{0}\in \C_{\mathrm{safe}} \mid \; \phi(t, x_{0}, \kappa) \in \C_{\mathrm{safe}}, \; \forall t \ge 0, \nonumber \\
     & \text{and}\quad \lim_{t\to\infty}\phi(t, x_{0}, \kappa) = \xe\Big\}.
\end{align}

In this work, we adopt the softmax relaxation idea of the safe set $\C_{\mathrm{safe}}$ from \cite{liu2025computing}, which approximates the pointwise maximum function $h_{max}$ with a smooth function as follows:
\begin{equation}
    h_{\mathrm{soft}}(x; \tau) := \frac{1}{\tau}\log \left( \sum_{i=1}^{N}e^{\tau h_{i}(x)}
    \right),
\end{equation}
where $\tau > 0$ is a tunable parameter. It can be shown that
\[
    \C:=\{h_{\mathrm{soft}}\le 1\} \;\subseteq\; \{h_{\max}\le 1\} \;=\; \C_{\mathrm{safe}}
    .
\]
Denote the boundary of $\C$ by
$\partial \C := \{x \in \X \mid h_{\mathrm{soft}}(x) = 1\}$.

\subsection{Linear temporal logic}

Let $AP$ be a finite set of atomic propositions. We consider the stutter-invariant fragment of LTL, denoted by $\LTL$, which excludes the next operator $\bigcirc$~\cite{clarke1999model}.
We further assume that all formulas are in positive normal form~\cite{clarke1999model}, meaning that negation is only allowed directly in front of atomic propositions.
The syntax of $\LTL$ over $AP$ is defined inductively as
\[
    \varphi ::= \top \mid \bot \mid \pi \mid \neg \pi \mid \varphi_{1}\wedge \varphi
    _{2}\mid \varphi_{1}\vee \varphi_{2}\mid \varphi_{1}\,\mathbf{U}\, \varphi
    _{2}\mid \varphi_{1}\,\mathbf{R}\, \varphi_{2},
\]
where $\pi \in AP$. Here, $\top$ and $\bot$ denote \emph{true} and \emph{false}, respectively; $\neg$, $\wedge$, and $\vee$ denote \emph{negation}, \emph{conjunction}, and \emph{disjunction}; and $\mathbf{U}$ and $\mathbf{R}$ denote the temporal operators \emph{until} and \emph{release}, respectively. The derived temporal operators $\mathbf{F}$ (\emph{eventually}) and $\mathbf{G}$  (\emph{always}) are defined by
$\mathbf{F}\varphi := \top \,\mathbf{U}\, \varphi$ and
$\mathbf{G}\varphi := \bot \,\mathbf{R}\, \varphi$.

Let $\sigma = \sigma_{0}\sigma_{1}\sigma_{2}\cdots$ be an infinite word over the alphabet $2^{AP}$, where each $\sigma_{i}\subseteq AP$. The satisfaction relation $\sigma,i \models \varphi$, meaning that $\sigma$ satisfies $\varphi$ at position $i$, is defined inductively as follows \cite{li2019robustly}:
\begin{itemize}
    \item $\sigma, i \models \top \Leftrightarrow \text{True}$;

    \item $\sigma, i \models \bot \Leftrightarrow \text{False}$;

    \item $\sigma, i \models \pi \Leftrightarrow \pi \in \sigma_{i}$;

    \item $\sigma, i \models \neg \pi \Leftrightarrow \pi \notin \sigma_{i}$;

    \item $\sigma, i \models \varphi_{1}\wedge \varphi_{2}\Leftrightarrow \sigma
        , i \models \varphi_{1}\;\land\; \sigma, i \models \varphi_{2}$;

    \item $\sigma, i \models \varphi_{1}\vee \varphi_{2}\Leftrightarrow \sigma
        , i \models \varphi_{1}\;\lor\; \sigma, i \models \varphi_{2}$;

    \item $\sigma, i \models \varphi_{1}\mathbf{U}\varphi_{2}\Leftrightarrow$
        \[
            \exists j \ge i, \;\left( \sigma, j \models \varphi_{2}\;\land\; \forall
            k \in [i,j),\, \sigma, k \models \varphi_{1}\right);
        \]

    \item $\sigma, i \models \varphi_{1}\mathbf{R}\varphi_{2}\Leftrightarrow$
        \[
            \forall j \ge i,\, \left( \sigma, j \models \varphi_{2}\;\lor\; \exists
            k \in [i,j),\, \sigma, k \models \varphi_{1}\right).
        \]
\end{itemize}
We write $\sigma \models \varphi$ if and only if $\sigma,0 \models \varphi$.


  A word $\sigma$ is said to be in \emph{lasso form} if it can be written as $\sigma
    =\sigma_{\mathrm{pre}}(\sigma_{\mathrm{suf}})^{\omega},$ where $\sigma_{\mathrm{pre}}$
    and $\sigma_{\mathrm{suf}}$ are finite words over $2^{AP}$. Here, $\sigma_{\mathrm{pre}}$
    is called the \emph{prefix}, $\sigma_{\mathrm{suf}}$ is called the \emph{suffix},
    and $(\sigma_{\mathrm{suf}})^{\omega}$ denotes the infinite repetition of $\sigma
    _{\mathrm{suf}}$. We use $\len(\cdot)$ to denote the length of a finite word.

Let $L:\X \to 2^{AP}$ be a labeling function, where $L(x)\subseteq AP$
denotes the set of atomic propositions satisfied at state $x\in \X$. Following
\cite{wongpiromsarn2015automata}, for each atomic proposition $\pi \in AP$, define
$\lbl{\pi}:=\{x \in \X \mid \pi \in L(x)\}.$ For any subset of atomic propositions
$P \in 2^{AP}$, define $\lbl{P}:=\{x \in \X \mid L(x)=P\}$. 
Hence, $\lbl{P}$ is the set of states that satisfy all and only the propositions
in $P$. Note that for any two distinct labels $P,Q \in 2^{AP}$, $\lbl{P}\cap \lbl{Q}= \emptyset$.

\begin{defn}
    [Trace of a continuous-time trajectory \cite{wongpiromsarn2015automata}]
    Given an initial condition $x_{0}\in \X$ and a control input $u \in \Ubox$,
    let $\phi(\cdot, x_{0}, u): \mathbb{R}_{\ge 0}\to \X$ be the corresponding trajectory of the continuous-time system~\eqref{eq:sys}.
    An infinite word $\sigma=\sigma_{0}\sigma_{1}\sigma_{2}\cdots$ over
    $2^{AP}$ is called a trace of $\phi(\cdot,x_{0},u)$ with respect to $\LTL$ if there exists a sequence of time instants $t_{0}t_{1}t_{2}\cdots$ such that
    \begin{enumerate}
        \item $t_{0}=0$, $t_{m}<t_{m+1}$ for all $m\in\mathbb{N}_0$, and $t_{m}
            \to\infty$ as $m\to\infty$;

        \item $\phi(t_{m},x_{0},u)\in \lbl{\sigma_m}$ for all $m \in \mathbb{N}_0$;

        \item if $\sigma_{m}\neq \sigma_{m+1}$ for each $m \in \mathbb{N}_0$,
            then there exists $t^{\prime}\in [t_{m},t_{m+1}]$ such that
            $\phi(t,x_{0},u)\in \lbl{\sigma_m}$ for all
            $t\in [t_{m},t^{\prime})$, $\phi(t,x_{0},u)\in \lbl{\sigma_{m+1}}$
            for all $t\in (t^{\prime},t_{m+1}]$, and
            $\phi\left(t^{\prime}, x_{0}, u\right) \in \lbl{\sigma_m}\cup \lbl
            {\sigma_{m+1}}$.
    \end{enumerate}
\end{defn}

\begin{defn}[\cite{wongpiromsarn2015automata}]
\label{def:traj_satisfaction}
 Given an initial condition $x_{0}\in \X$
and an input $u \in \Ubox$, the trajectory $\phi(\cdot,x_{0},u)$ satisfies an $\LTL$ formula $\varphi$, denoted by $\phi(\cdot,x_{0},u)\models
\varphi$, if the trace $\sigma$ of $\phi(\cdot,x_{0},u)$ satisfies $\varphi$, namely, $\sigma\models\varphi$.
\end{defn}

\yt{
\begin{defn}[Winning set for an LTL specification] 
\label{def:winning_set_ltl}
Given the control-affine system~\eqref{eq:sys}
and an $\LTL$ formula $\varphi$, let $\mathcal{M}$ be a finite memory set. A finite-memory control strategy is denoted by $\kappa:\X\times\mathcal{M}\to\Ubox$. Then, the winning set of system~\eqref{eq:sys} with respect to
$\varphi$ is defined as
\begin{equation}
\begin{aligned}
    \Win(\varphi) := & \Bigl\{x_{0}\in \X \;\Big|\; \exists \mathcal{M}, \exists \;\kappa:\X\times\mathcal{M}\to\Ubox              \\
& \qquad \qquad \text{ such that }\phi(\cdot,x_{0},\kappa)\models\varphi \Bigr\}.
\end{aligned}
\end{equation}
\end{defn}
That is, $\Win(\varphi)$ is the set of all initial states for which there exists a finite-memory control strategy such that the corresponding closed-loop trajectory satisfies $\varphi$.

The finite memory set $\mathcal M$ will be specified in Section~\ref{subsec:problemformulation} to encode the progress of the LTL specification during the closed-loop execution. 
}

\subsection{Problem formulation}
\label{subsec:problemformulation}

Consider the continuous-time system~\eqref{eq:sys} and an $\LTL$ specification $\varphi$ defined over a set of atomic propositions $AP$. Since computing the exact winning set $\operatorname{Win}(\varphi)$ is generally intractable, the objective of this work is to compute a certified inner approximation $\widehat{\operatorname{Win}}(\varphi) \subseteq \operatorname{Win}(\varphi)$,
and to synthesize a switching state-feedback control policy such that, for every initial condition $x(0) \in \widehat{\operatorname{Win}}(\varphi)$, the resulting closed-loop trajectory satisfies $\varphi$.


In this work, given a lasso-form word $\sigma \models \varphi$ and the labeling function $L$ associated with $AP$, the task of satisfying  $\varphi$ can be decomposed into a finite sequence of safe-stabilization subtasks induced by $\sigma$. The overall synthesis problem is then reduced to computing CLBF-based inner approximations of the local winning sets for these subtasks and patching the corresponding local controllers.

The computed set $\widehat{\Win}(\varphi)$ and the associated switching policy together provide a correct-by-construction motion planner: for any initial condition in this certified set, the planner generates a closed-loop trajectory satisfying the prescribed $\LTL$ specification.





\section{LTL specification decomposition}
\label{sec:decomposition}   
Consider an $\LTL$ formula $\varphi$ over $AP$. Let
$
\sigma=\sigma_{\mathrm{pre}}(\sigma_{\mathrm{suf}})^{\omega}
$ be a word over $2^{AP}$ such that $\sigma \models \varphi$. In this work, we interpret $\sigma$ as a reference trace for system~\eqref{eq:sys}. Since $\sigma \models \varphi$, synthesizing a feedback control policy whose closed-loop trajectory induces the trace $\sigma$ ensures the satisfaction of the specification $\varphi$. Following the sequential structure of $\sigma$, we decompose the synthesis problem into safe-stabilization subtasks, each associated with a pair of consecutive letters of $\sigma$.


Specifically, for each consecutive pair $(\sigma_m, \sigma_{m+1})$ in the lasso-form word $\sigma$, with $m\in\mathbb{N}_0$, let $\Xm \subseteq \X$ denote the local state-space region associated with the transition from $\sigma_m$ to $\sigma_{m+1}$. We formulate such a transition as a local reach-avoid control problem and define its transition tuple as
\begin{equation}
    \Tm := (\mathcal{S}_m,\mathcal{G}_m,\mathcal{A}_m),
    \label{eq:transition_set}
\end{equation}
where $\mathcal{S}_m, \mathcal{G}_m, \mathcal{A}_m  \subset \Xm$ denote the source set, the goal set, and the forbidden set, respectively.

Since $\sigma$ is in lasso form, the induced infinite transition sequence
$\{\Tm\}_{m\in\mathbb{N}_0}$ admits a finite prefix-suffix representation. Accordingly, we let $M_{\mathrm{pre}}=\len(\sigma_{\mathrm{pre}})$ denote the number of prefix transitions, including the transition from the last prefix symbol to the first suffix symbol when the suffix is nonempty, and let $M_{\mathrm{suf}}=\len(\sigma_{\mathrm{suf}})$ denote the number of transitions in one suffix cycle, including the wrap-around transition from the last suffix symbol to the first one. We then write the corresponding finite pattern as $\{\Tm\}_{m=0}^{M-1}$, where $M=M_{\mathrm{pre}}+M_{\mathrm{suf}}$.

Therefore, realizing the reference word $\sigma$ is reduced to a sequence of transitions $\{\Tm\}_{m=0}^{M-1}$, each requiring the system to move from $\Sm$ to $\Gm$ while avoiding $\Am$. The following assumptions are imposed.

    \begin{assumption}
        \label{assum:transition_wellposed} For each transition $\Tm = (\Sm,\Gm,\Am
        )$, we assume that
        \begin{itemize}
            \item $\Sm \subseteq \lbl{\sigma_m}$ is nonempty and satisfies $\Sm \cap \Am = \emptyset$;


            \item $\Gm \subseteq \lbl{\sigma_{m+1}}$ is nonempty, connected and closed, and satisfies $\Gm \cap \Am = \emptyset$;

            \item $\Am$ is closed.
        \end{itemize}
    \end{assumption}

    \begin{assumption}
        \label{assum:goal_set_metric} 
        For each transition $\Tm =(\Sm, \Gm, \Am)$, there exist a designated center point $x_{m}^{\star}\in \operatorname{int}(\mathcal{G}_{m})$ and a continuous metric $d_{m}:\Xm\times\Xm\to\mathbb{R}_{\ge 0}$ such that
        \begin{equation}
            \label{eq:goal_set_metric}\mathcal{G}_{m}=\{x \in \Xm \mid d_{m}(x,x_{m}
            ^{\star})\le \delta_{m}\} ,
        \end{equation}
        where $\delta_{m}>0$ is a scalar constant defining the boundary of the goal set. 
    \end{assumption}

    \begin{defn}[Winning set for a word-induced transition] Consider a transition $\Tm=(\Sm,\Gm,\Am)$ induced by the pair of consecutive symbols $(\sigma_{m},\sigma_{m+1})$ in the reference word $\sigma$, as defined in~\eqref{eq:transition_set}. The winning set of $\Tm$ is defined as 
    \begin{equation}
    \begin{aligned}
    \Win_{\sigma}(\Tm) :=&\Bigl\{ x_{0}\in \Sm \;\Big|\; \exists\, \kappa_{m}:\Xm \to\Ubox,\ \exists\, \tau_{m}\ge 0 \\&\text{such that }\phi(t,x_{0},\kappa_{m})\notin \Am,\ \forall t\in[0,\tau_{m}], \\&\text{and}\quad \phi(\tau_{m},x_{0},\kappa_{m})\in \Gm \Bigr\}.
    \end{aligned}
    \end{equation} 
    Here, $\tau_{m}$ denotes the duration of the transition from $\sigma_{m}$ to $\sigma_{m+1}$, $\kappa_{m}:\Xm\to\Ubox$ is a local feedback control strategy, and $\phi(\cdot,x_{0},\kappa_{m})$ is the corresponding trajectory of system~\eqref{eq:sys}.
    \end{defn} 

\yt{
For the finite transition pattern $\{\Tm\}_{m=0}^{M-1}$ induced by the reference word $\sigma$, we instantiate the finite memory set in Definition~\ref{def:winning_set_ltl} as $\mathcal M := \{0,\ldots,M-1\}$ where each element $m\in\mathcal M$ indicates the currently active local transition task $\Tm$. Given local feedback control strategies $\{\kappa_m\}_{m=0}^{M-1}$, the corresponding global finite-memory control strategy is defined by $
\kappa(x,m)=\kappa_m(x)$, $m\in\mathcal M$. Let $\zeta:\mathbb N_0\to\mathcal M$ denote the lasso index map, with $\zeta(0)=0$, which follows the prefix transitions and then loops over the suffix transitions. For each $n\in\mathbb N_0$, when the current memory element is $m=\zeta(n)$ and the trajectory reaches $\mathcal{G}_{\zeta(n)}$, then the memory element is updated to $\zeta(n+1)$.

\begin{proposition}\label{prop:local_to_global_ltl} Consider an $\LTL$ formula $\varphi$, let $\sigma=\sigma_{\mathrm{pre}}(\sigma_{\mathrm{suf}})^{\omega}$ be a reference word such that $\sigma\models\varphi$. Let $\{\Tm\}_{m=0}^{M-1}$, where $M=M_{\mathrm{pre}}+M_{\mathrm{suf}}$, be the corresponding sequence of transitions induced by $\sigma$, with winning sets $\Win_{\sigma}(\Tm)$. Let $\zeta:\mathbb{N}_0\to\mathcal M$ be the lasso index map defined above. Suppose that $x_{0}\in \Win_{\sigma}(\mathcal{T}_{0})$ and \begin{equation}
\mathcal{G}_{\zeta(n)} \subseteq \Win_{\sigma}(\mathcal{T}_{\zeta(n+1)}), \qquad n \in \mathbb{N}_0.
\label{eq.Gn_sub_Win}
\end{equation} 
Then there exists a finite-memory control strategy $\kappa:\X\times\mathcal M\to\Ubox$ for the system~\eqref{eq:sys}, obtained by sequentially patching the local strategies  $\{\kappa_{m}\}_{m=0}^{M-1}$, such that the corresponding trajectory satisfies $\phi(\cdot,x_{0},\kappa)\models\varphi.$
\end{proposition}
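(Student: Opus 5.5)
The proof goes by induction on the transition index $n$. At each stage we build a piece of the closed-loop trajectory, and then we check that the concatenated trajectory has a trace that coincides with the reference word $\sigma$, up to stuttering. Since $\varphi\in\LTL$ is stutter-invariant and $\sigma\models\varphi$, this is enough.

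\emph{Construction.} We define the memory-driven switching strategy $\kappa(x,m)=\kappa_m(x)$ exactly as described before the proposition. The memory starts at $\zeta(0)=0$. At stage $n$, the current state $x(T_n)$ lies in $\Win_\sigma(\mathcal T_{\zeta(n)})$. Here we set $T_0=0$; this uses $x_0\in\Win_\sigma(\mathcal T_0)$ when $n=0$, and uses \eqref{eq.Gn_sub_Win} applied to $n-1$ when $n\ge 1$. By the definition of the word-induced winning set, there is a local feedback $\kappa_{\zeta(n)}$ and a finite time $\tau_{\zeta(n)}$ such that the trajectory avoids $\mathcal A_{\zeta(n)}$ on $[T_n,T_n+\tau_{\zeta(n)}]$ and reaches $\mathcal G_{\zeta(n)}$. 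We let $T_{n+1}$ be the first hitting time of $\mathcal G_{\zeta(n)}$; it exists because $\mathcal G_m$ is closed (Assumption~\ref{assum:transition_wellposed}), so the infimum of the hitting times is attained. At $T_{n+1}$ the memory switches to $\zeta(n+1)$. Then $x(T_{n+1})\in\mathcal G_{\zeta(n)}\subseteq\Win_\sigma(\mathcal T_{\zeta(n+1)})\subseteq\mathcal S_{\zeta(n+1)}$, and the induction continues. One subtlety is that the local feedback depends on the initial point. I would handle this by letting the finite memory index the transition only, and noting that the existence of $\kappa_m$ is asserted on all of $\Win_\sigma(\mathcal T_m)$. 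If necessary, I would read the definition as a uniform choice of $\kappa_m$, which is how the CLBF construction later in the paper supplies it. Concatenating the pieces gives a trajectory that is defined for all time, by the standing forward-completeness assumption.

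\emph{Trace.} We choose sampling instants inside each interval $[T_n,T_{n+1}]$. By construction $x(T_n)\in\mathcal S_{\zeta(n)}\subseteq\lbl{\sigma_{\zeta(n)}}$ and $x(T_{n+1})\in\mathcal G_{\zeta(n)}\subseteq\lbl{\sigma_{\zeta(n)+1}}$. I would take the forbidden set $\mathcal A_m$ to be the complement of $\lbl{\sigma_m}\cup\lbl{\sigma_{m+1}}$ inside $\Xm$. With that choice, avoiding $\mathcal A_m$ keeps the trajectory in these two label regions. Let $t'$ be the last time in $[T_n,T_{n+1}]$ at which the trajectory lies in $\lbl{\sigma_{\zeta(n)}}$. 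This $t'$ is the switching instant required by item 3 of the trace definition, so the trajectory's trace is $\sigma_{\zeta(0)}\sigma_{\zeta(1)}\cdots=\sigma$, possibly with repeated letters. Repeated letters are harmless by stutter invariance. We also need $t_m\to\infty$. If the switching times $T_n$ accumulate, we insert extra sampling times inside $\mathcal G_{\zeta(n)}$ or after the last switch; this adds only stuttering. Alternatively, we can rely on each $\tau_m>0$ being bounded below, which holds because $x(T_n)\in\mathcal S$ and $\mathcal S$ is disjoint from $\mathcal G$.

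\emph{Main obstacle.} The hard step is the trace-level argument, not the construction. Two things must be checked. First, the trajectory can oscillate between $\lbl{\sigma_m}$ and $\lbl{\sigma_{m+1}}$ before it enters $\mathcal G_m$, which would produce a trace like $\sigma_m\sigma_{m+1}\sigma_m\sigma_{m+1}$. This is not a stuttering of $\sigma$. I would first try to exclude it by requiring that, during transition $m$, the trajectory stays in $\lbl{\sigma_m}$ until it hits $\mathcal G_m$, i.e.\ by defining $\mathcal A_m\supseteq \Xm\setminus(\lbl{\sigma_m}\cup\mathcal G_m)$. This is the interpretation of the decomposition I would state explicitly. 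Second, the lasso wrap-around has to be consistent with $\zeta$, so that the suffix loops correctly. This follows from the definitions of $M_{\mathrm{pre}}$ and $M_{\mathrm{suf}}$.
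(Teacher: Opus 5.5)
Your proposal follows the paper's proof: the same induction that starts from $x_0\in\Win_\sigma(\mathcal T_0)$, runs $\kappa_{\zeta(n)}$ until $\mathcal G_{\zeta(n)}$ is reached, and uses \eqref{eq.Gn_sub_Win} to restart in $\Win_\sigma(\mathcal T_{\zeta(n+1)})$; the points you make explicit (one uniform $\kappa_m$ per memory state, first hitting times, and $\mathcal A_m$ excluding every label other than $\sigma_m$ before $\mathcal G_m$ is hit) are what the paper's one-line claim that the induced trace is $\sigma$ silently relies on. The one step to fix is $t_m\to\infty$: $\mathcal S_m\cap\mathcal G_m=\emptyset$ gives only $T_{n+1}>T_n$, not a uniform lower bound, and inserting sampling times cannot undo accumulating switches, so argue instead that at an accumulation time $T^\ast$ (the trajectory extends to $T^\ast$ by forward completeness) continuity and closedness would put $x(T^\ast)$ in every suffix goal set $\mathcal G_m\subseteq\lbl{\sigma_{m+1}}$, which is impossible once two suffix letters differ because distinct label sets are disjoint; a one-letter suffix such as $\emptyset(\{\mathrm{Green}\})^\omega$ makes the switches collapse to a single instant and needs invariance of the final goal set, which neither your argument nor the paper's supplies.
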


\begin{proof}
We construct the closed-loop trajectory recursively.Since $\zeta(0)=0$ and $x_{0}\in \Win_{\sigma}(\mathcal{T}_{0})\subseteq \mathcal{S}_{0}\subseteq \lbl{\sigma_{0}}$, the initial symbol $\sigma_{0}$ is satisfied. By the definition of $\Win_{\sigma}(\mathcal{T}_{0})$ and~\eqref{eq.Gn_sub_Win}, there exists a local strategy $\kappa_{0}$ and a finite time $\tau_{0}\ge 0$ such that $\phi(t,x_{0},\kappa_{0})\notin \mathcal{A}_{0}$, $\forall t\in[0,\tau_{0}],$ and $x_{1}:=\phi(\tau_{0},x_{0},\kappa_{0})\in \mathcal{G}_{0}\subseteq \Win_{\sigma}(\mathcal{T}_{1})\subseteq \lbl{\sigma_1}.$

Proceeding inductively, suppose that for some $n \geq 1$,
$x_{n}\in \Win_{\sigma}(\mathcal{T}_{\zeta(n)})
\subseteq \mathcal{S}_{\zeta(n)}
\subseteq \lbl{\sigma_{\zeta(n)}}.$
Then, by the definition of $\Win_{\sigma}(\mathcal{T}_{\zeta(n)})$ and~\eqref{eq.Gn_sub_Win}, there exists a local strategy $\kappa_{\zeta(n)}$ and a finite time $\tau_{n}\ge 0$ such that
$\phi(t,x_{n},\kappa_{\zeta(n)})\notin \mathcal{A}_{\zeta(n)}$ for all
$t\in[0,\tau_{n}]$, and $x_{n+1}:=\phi(\tau_{n},x_{n},\kappa_{\zeta(n)})
\in \mathcal{G}_{\zeta(n)}
\subseteq \Win_{\sigma}(\mathcal{T}_{\zeta(n+1)})
\subseteq \lbl{\sigma_{\zeta(n+1)}}.
$

Hence, the local strategies $\{\kappa_{m}\}_{m=0}^{M-1}$ can be patched sequentially to generate a trajectory whose induced trace is $\sigma$. Since $\sigma \models \varphi$, the induced trace of $\phi(\cdot,x_0,\kappa)$ satisfies $\varphi$. By Definition~\ref{def:traj_satisfaction}, we conclude that $\phi(\cdot,x_0,\kappa)\models\varphi$.
\end{proof} 
}

Consequently, each transition $\Tm$ can be formulated as a bounded-input safe stabilization subproblem: steer the state toward the designated center $x_{m}^{\star}$ while avoiding $\Am$. The transition is declared finished when the state enters the goal set $\Gm$, whose reaching criterion is given by \eqref{eq:goal_set_metric}.
In the following section, we present the patching CLBF method to address this issue, adopted from \cite{liu2025formal}.

\section{Safe Stabilization with Patching CLBF}
\label{sec:patch_clbf} 
For a given reach-avoid transition $\Tm$, we first present the definitions of CLFs and CBFs with bounded control inputs, which are the building blocks for the patching CLBFs. We assume that $\xs$ is the unique equilibrium point in $\Xm$, i.e., $f(\xs) = 0$, in the absence of
control input ($u=0$). $\Am = \X_{m}/ \C_{\mathrm{safe}}^{m}$, where $\C_{\mathrm{safe}}^{m}$ is
the safe set defined in Section \ref{sec:state_constraints} with $\xs\in \mathrm{Int}(\C^m_{\mathrm{safe}})$ for $\Tm$ with a finite set of $h_{i}^{m}$, and we denote its smooth
approximation by $\C^{m}$.

\begin{defn}
    A continuously differentiable function $V:\Xm \to \mathbb{R}$ is a CLF for
    system \eqref{eq:sys} such that $V(\xs) = 0$, and there exists a control
    input $u \in \Ubox$, $\forall x \in \Xm \setminus \{\xs\}$,
    \begin{align}
        V(x) > 0, \quad \text{and}                  \\
        L_{f}V(x) + L_{g}V(x) u < 0, \label{eq:clf}
    \end{align}
    where $L_{f}V(x) := \nabla V(x) \cdot f(x)$ and $L_{g}V(x) := \nabla V(x)
    \cdot g(x)$ are the Lie derivatives of $V$ along $f$ and $g$,
    respectively.
\end{defn}
\begin{defn}
    A continuously differentiable function $h:\Xm \to \mathbb{R}$ is a CBF for
    system \eqref{eq:sys} with respect to $\Cm$ if there exists a control
    input $u \in \Ubox$, $\forall x \in \Cm$,
    \begin{align}
        h(x) \le 1, \quad \text{and}                      \\
        \quad L_{f}h(x) + L_{g}h(x) u < 0. \label{eq:cbf}
    \end{align}
\end{defn}
Following \cite{liu2025computing,liu2025formal}, we have the following proposition
about the strict compatible condition between the CLF $V$ and the CBF $h$.
\begin{proposition}[\cite{liu2025formal}]
\label{prop:compatible} 
Let $V:\Xm \to \mathbb{R}$ be a candidate CLF and
$h:\Xm \to \mathbb{R}$ be a candidate CBF for system \eqref{eq:sys} with respect to $\Cm$. If the following conditions hold:
\begin{enumerate}
\item For every $x \in \Cm\setminus \{\xs\}$, there exists
$u\in\Ubox$ such that
\begin{equation}
    \label{eq:CLFV}L_{f}V(x) + L_{g}V(x) u < 0.
\end{equation}
Note that this condition is equivalent to
\begin{equation}
    \label{eq:clf_box}
    \begin{aligned}
         & (h(x)\le 1 \wedge x\neq \xe) \;\Longrightarrow\;              \\
         & L_{f}V(x) + \sum_{j=1}^{p}\!\left(m_{j}\,(L_{g}V(x))_{j}- r_{j}\,\big|(L_{g}V(x))_{j}\big| \right) < 0,
    \end{aligned}
\end{equation}
where $m_{j}=\tfrac{\underline{u}_j+\overline{u}_j}{2}$ and
$r_{j}=\tfrac{\overline{u}_j-\underline{u}_j}{2}$.

\item For every $x \in \partial \Cm$, there exists $u\in\Ubox$ such
that
\begin{equation}
    \label{eq:CLFBoundary}L_{f}V(x) + L_{g}V(x) u < 0,
\end{equation}
and
\begin{equation}
    \label{eq:CBFBoundary}L_{f}h(x)+L_{g}h(x) u < 0.
\end{equation}
Given the hyperbox constraint on the control input, the above two
conditions \eqref{eq:CLFBoundary} and \eqref{eq:CBFBoundary} are
equivalent to
\begin{equation}
\label{eq:farkas-box}
\begin{aligned}
 & \Bigl(h(x)=1 \\
 & \ \wedge\ \left\{ \begin{aligned}&\lambda_{1},\lambda_{2},\ \lambda_{j}^{\pm}\ge 0 \quad (j=1,\dots,p),\\
 &\lambda_{1}+\lambda_{2}+\sum_{j=1}^{p}(\lambda_{j}^{+}+ \lambda_{j}^{-}) =1\end{aligned}\right. \\
 & \ \wedge\ \lambda_{1}L_{g}V(x)+\lambda_{2}L_{g}h(x) +\sum_{j=1}^{p}(\lambda_{j}^{+}- \lambda_{j}^{-})\,e_{j}= 0\Bigr) \\
 & \Longrightarrow\; \lambda_{1}L_{f}V(x)+\lambda_{2}L_{f}h(x) < \sum_{j=1}^{p}(\lambda_{j}^{+}\overline u_{j}-\lambda_{j}^{-}\underline{u}_{j}).
\end{aligned}
\end{equation}
Here $e_{j}\in \mathbb{R}^{1 \times p}$ denotes the $j$-th standard
basis row vector.
\end{enumerate}
Then, $V$ and $h$ are said to be strictly compatible.
\end{proposition}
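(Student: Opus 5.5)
The plan is to treat the two equivalences separately. In both, $x$ is fixed, so the Lie derivatives are constant data. Write $a_1:=L_gV(x)$, $b_1:=L_fV(x)$, $a_2:=L_gh(x)$, $b_2:=L_fh(x)$. Each claim then becomes a statement about linear inequalities in $u$ over the hyperbox $\Ubox$. We assume $\underline u_j<\overline u_j$ for all $j$; the degenerate case only needs a remark.

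\emph{Condition 1.} There exists $u\in\Ubox$ with $b_1+a_1u<0$ iff $b_1+\min_{u\in\Ubox}a_1u<0$, since the box is compact and the minimum is attained. The box is a product of intervals and $a_1u=\sum_j (a_1)_ju_j$ is separable. Hence the minimum splits into $p$ scalar problems $\min_{u_j\in[\underline u_j,\overline u_j]}(a_1)_ju_j$. Writing $u_j=m_j+r_js_j$ with $s_j\in[-1,1]$, each scalar minimum equals $m_j(a_1)_j-r_j|(a_1)_j|$. This gives \eqref{eq:clf_box} on $\{h\le 1\}\setminus\{\xs\}$ directly.

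\emph{Condition 2.} Here we need a theorem of alternatives for two strict inequalities over a polytope. We would use a minimax argument rather than cite Motzkin's theorem, because the normalization in \eqref{eq:farkas-box} then appears naturally. The first step is to note that there exists $u\in\Ubox$ satisfying both \eqref{eq:CLFBoundary} and \eqref{eq:CBFBoundary} iff
\[
\min_{u\in\Ubox}\ \max_{\lambda_1+\lambda_2=1,\ \lambda_i\ge 0}\ \lambda_1(b_1+a_1u)+\lambda_2(b_2+a_2u)<0 .
\]
The objective is bilinear and both feasible sets are compact and convex, so von Neumann's minimax theorem lets us swap min and max. The condition therefore becomes: for every $(\lambda_1,\lambda_2)$ in the simplex,
\[
\lambda_1b_1+\lambda_2b_2+\min_{u\in\Ubox}c\,u<0,\qquad c:=\lambda_1a_1+\lambda_2a_2 .
\]
Next, LP duality for the box, whose constraints are $e_ju\le\overline u_j$ and $-e_ju\le-\underline u_j$, gives
\[
\min_{u\in\Ubox}c\,u=\max\Bigl\{-\textstyle\sum_j(\mu_j^+\overline u_j-\mu_j^-\underline u_j)\ :\ \mu^\pm\ge0,\ c+\textstyle\sum_j(\mu_j^+-\mu_j^-)e_j=0\Bigr\}.
\]
The maximum is attained. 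So the requirement becomes: for all $\lambda_{1,2}\ge0$ with $\lambda_1+\lambda_2=1$ and all $\mu^\pm\ge0$ satisfying the equality constraint, $\lambda_1b_1+\lambda_2b_2<\sum_j(\mu_j^+\overline u_j-\mu_j^-\underline u_j)$.

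Finally we pass to the single normalization in \eqref{eq:farkas-box}. Both the equality constraint and the strict inequality are positively homogeneous in the full multiplier vector $(\lambda_1,\lambda_2,\mu^\pm)$, so rescaling onto the simplex $\lambda_1+\lambda_2+\sum_j(\lambda_j^++\lambda_j^-)=1$ is harmless. The only new multipliers introduced by this renormalization are those with $\lambda_1=\lambda_2=0$, and we must check them separately. For these, the equality constraint forces $\lambda_j^+=\lambda_j^-$, so the left-hand side is $0$ while the right-hand side is $\sum_j\lambda_j^+(\overline u_j-\underline u_j)>0$. Thus they satisfy the implication automatically. This step is where nondegeneracy of the box is needed. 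Conversely, every multiplier with $\lambda_1+\lambda_2>0$ rescales back to the form above, so the two conditions are equivalent.

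The main obstacle is justifying the min–max exchange and the attainment of the dual maximum, so that strict inequalities are preserved in both directions. Compactness of the box and of the simplex, together with finiteness of the LP, is what we rely on; we would check it carefully rather than treat it as routine.
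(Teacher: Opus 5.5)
Your argument is correct, but it takes a different route from the one the paper relies on. The paper gives no proof of its own and defers to \cite{liu2025formal}. The multiplier form of \eqref{eq:farkas-box} is what a Farkas/Motzkin theorem of alternatives produces when applied directly to the mixed system: the two strict inequalities \eqref{eq:CLFBoundary}--\eqref{eq:CBFBoundary} together with the non-strict box constraints.

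That route is shorter and lands immediately on a first-order formula suitable for dReal. Its cost is that it needs the transposition theorem for mixed strict/non-strict systems. In that theorem, the alternative in which the strict multipliers vanish only gives the non-strict bound $\sum_j\lambda_j^+(\overline u_j-\underline u_j)\ge 0$, and this has to be upgraded by hand.

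Your route uses only standard tools: compactness, von Neumann's minimax theorem, strong LP duality for the box, and positive homogeneity. It makes the normalization $\lambda_1+\lambda_2+\sum_j(\lambda_j^++\lambda_j^-)=1$ a harmless rescaling. It also isolates exactly the multipliers with $\lambda_1=\lambda_2=0$ as the only place where $\underline u_j<\overline u_j$ is used.

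That hypothesis is genuinely necessary, not a case to dispatch with a remark. If $\underline u_j=\overline u_j$ for some $j$, take $\lambda_j^+=\lambda_j^-=\tfrac12$ and all other multipliers zero. This choice violates \eqref{eq:farkas-box} at every $x$ with $h(x)=1$, so the stated equivalence fails. Nondegeneracy of the box must therefore be read as implicit in the statement.

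A side benefit of your approach is a shortcut. Right after the minimax exchange, the closed form from condition~1 applies with $c=\lambda_1L_gV(x)+\lambda_2L_gh(x)$. Condition~2 is then equivalent to the one-parameter family $\lambda_1L_fV(x)+\lambda_2L_fh(x)+\sum_j(m_jc_j-r_j|c_j|)<0$ over $\lambda_1,\lambda_2\ge 0$, $\lambda_1+\lambda_2=1$. The LP-duality step is needed only to reproduce the particular multiplier form of \eqref{eq:farkas-box}.
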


We direct the readers to \cite{liu2025formal} for the detailed proof of
Proposition \ref{prop:compatible}. It is worth noting that conditions \eqref{eq:clf_box}
for the CLF and \eqref{eq:farkas-box} for the compatibility are first-order formulas,
which can be verified by off-the-shelf SMT solvers, such as dReal~\cite{gao2013dreal},
which provides the formal guarantees. Suppose that $\Cm$ is compact, we have
the following theorem.
\begin{thm}
    [\cite{liu2025computing, liu2025formal}] \label{thm:compatibility} Let $V$
    be a CLF and $h$ be a CBF for system \eqref{eq:sys} with respect to $\Cm$.
    If $V$ and $h$ are strictly compatible as defined in Proposition \ref{prop:compatible},
    then there exists a continuously differentiable function
    $W:\,\Xm \ra\Real$, which is a CLBF for system \eqref{eq:sys}, satisfying:
    \begin{enumerate}
        \item $W$ is positive definite, and $W(\xs)=0$.
        \item $\Cm =\set{x\in\Xm \mid W(x)\le 1}$.

        \item For every $x \in \Cm \setminus \{\xs\}$, there exists
            $u\in\Ubox$ such that
            \begin{equation}
                \label{eq:CLFW}L_{f}W(x) + L_{g}W(x) u < 0.
            \end{equation}
    \end{enumerate}
\end{thm}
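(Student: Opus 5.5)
I will construct $W$ explicitly by patching $V$ and $h$ with a smooth, state-dependent convex combination. The combination leaves $V$ unchanged near $\xs$, bends it toward $h$ near $\partial\Cm$, and is normalized so that its $1$-level set is exactly $\partial\Cm$.

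\emph{Step 1: uniform margins.} Write $\dot V(x,u):=L_fV(x)+L_gV(x)u$ and $\dot h(x,u):=L_fh(x)+L_gh(x)u$. The set $\partial\Cm$ is compact, $\Ubox$ is compact, and the strict inequalities \eqref{eq:CLFBoundary}--\eqref{eq:CBFBoundary} depend continuously on $(x,u)$. Hence there exist $\epsilon>0$ and a neighborhood $\mathcal N=\{x\in\Cm : h(x)\ge 1-\eta\}$ of $\partial\Cm$ with the following property: for every $x\in\mathcal N$ there is a $u\in\Ubox$ with $\dot V(x,u)<-\epsilon$ and $\dot h(x,u)<-\epsilon$ simultaneously. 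This uses a finite-cover argument: open sets on which a fixed $u$ works.

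By condition 1 of Proposition~\ref{prop:compatible}, on $\Cm\setminus\mathcal N$ (minus a small ball around $\xs$) there is a $u$ with $\dot V(x,u)<0$. I also choose $\eta$ small enough that $\xs\notin\mathcal N$.

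\emph{Step 2: construction.} Let $c:=\max_{\Cm}V>0$. Pick a smooth nondecreasing $\rho:\Real\to[0,1]$ with $\rho(s)=0$ for $s\le 1-\eta$ and $\rho(s)=1$ for $s\ge 1$. Define
\[
W(x):=\bigl(1-\rho(h(x))\bigr)\frac{V(x)}{c'}+\rho(h(x))\,h(x),
\]
with $c'>c$ chosen so that $V/c'<1-\eta$ on $\Cm$. This choice is possible after also shrinking $\eta$, or by replacing $V/c'$ with $\alpha V$ for small $\alpha$.

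Then near $\xs$ we have $W=\alpha V$, so $W(\xs)=0$ and $W$ is positive definite, since both pieces are positive away from $\xs$. On $\partial\Cm$ we have $W=h=1$. Inside $\Cm$, $W$ is a convex combination of $\alpha V<1-\eta\le 1$ and $h\le 1$; moreover, when $h<1$ both terms are $<1$, so $W<1$. Outside $\Cm$, near the boundary, $h>1$ and $\rho=1$, giving $W=h>1$.

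To get item 2 globally, I will either restrict attention to $\Xm$ near $\Cm$, or replace $\rho(h)$ by a function equal to $1$ for all $h\ge1$, so that $W=h$ there. This yields $\Cm=\{W\le 1\}$.

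\emph{Step 3: decrease condition, the main obstacle.} Compute
\[
\nabla W=(1-\rho)\alpha\nabla V+\bigl(\rho+\rho'(h)(h-\alpha V)\bigr)\nabla h .
\]
On the region where $\rho=0$, this is $\alpha\nabla V$, and condition 1 gives a decreasing $u$. On $\mathcal N$, I take the common $u$ from Step 1. The coefficients $(1-\rho)\alpha$ and $\rho$ are nonnegative and not both zero. The extra coefficient $\rho'(h)(h-\alpha V)$ is also nonnegative, because $h\ge1-\eta>\alpha V$ on $\mathcal N$ and $\rho'\ge0$. Hence $L_fW+L_gWu$ is a nonnegative combination, with a positive total weight, of $\dot V(x,u)$ and $\dot h(x,u)$, both $<-\epsilon$, and so it is strictly negative.

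The delicate point is ensuring that the sign of $h-\alpha V$ is positive throughout the transition layer, and that the single $u$ from Step 1 really works on all of $\mathcal N$, not just on $\partial\Cm$. Both are handled by choosing $\eta$ first from compactness and then $\alpha$ small.
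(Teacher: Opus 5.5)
Your proposal is correct and is essentially the construction the paper sketches (following \cite{liu2025formal}). It uses a $C^1$ bump in $h$ supported on the inner layer $\{1-\varepsilon\le h\le 1\}$, rescales $V$ so that $V\le h$ on that layer, extends compatibility from $\partial\Cm$ to the layer by compactness, and uses the gradient identity $\nabla W=b\nabla h+(1-b)\nabla V+(h-V)\nabla b$ to show $\dot W$ is a positively weighted combination of $\dot V$ and $\dot h$. The hedge in Step 2 about restricting to a neighborhood of $\Cm$ is unnecessary, since $\rho\equiv 1$ on $[1,\infty)$ already gives $W=h>1$ on all of $\Xm\setminus\Cm$.
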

\rz{We refer the reader to~\cite{liu2025formal} for the detailed proof of this theorem.}
To construct such a CLBF $W$, for some $\varepsilon > 0$, we define the smooth
bump function
\[
    b=
    \begin{cases}
        \exp\!\Bigl(-\dfrac{1}{\varepsilon^2 - (h(x)-1)^2}+ \dfrac{1}{\varepsilon^2}\Bigr), & 1-\varepsilon < h(x) < 1, \\[4pt]
        1,                               & h(x) \ge 1, \\
        0,                               & h(x) \le 1-\varepsilon.
    \end{cases}
\]
As discussed in \cite{liu2025formal}, the compatibility also holds on the inner boundary of $\Cm$, i.e., $\partial\mathcal{C}_{\varepsilon}^{-}:= \{x \in \Xm \mid 1-\varepsilon \le h ( x) \le 1\}.$ By rescaling $V$ if necessary, we can assume that $V(x) \le 1$, $\forall x \in \Cm$, and $V(x) \le h(x)$ $\forall x \in \partial\mathcal{C}_{\varepsilon}^{-}$. Then, $W$ can be constructed as follows \cite{liu2025computing}:
\begin{equation}
    \label{eq:CLBF}W(x) := (1 - b(x)) V(x) + b(x) h(x),
\end{equation}
and the 1-sublevel set of $W$, denoted as $\Wm$, captures \rz{an estimate of the} safe ROA of $\xs$. Thus, we have the approximated winning set $\hwint = \Wm \setminus \Gm$, and $\hwint \subseteq \Win_{\sigma}(\Tm)$ . \rz{By guaranteeing 
\begin{equation} \label{eq:subset}
    \mathcal{G}_{m-1} \subset \hwint \subset \Win_{\sigma}(\Tm),
\end{equation}
 we have that \eqref{eq.Gn_sub_Win} still holds with the approximations of local winning sets.}
\begin{rem}
It is worth noting that one can also use Lyapunov stability theory for set stability with respect to $\Gm$ to construct the CLBF $W$ for the transition $\Tm$ given some other assumptions, which turns the local specification into a reachability problem while keeping safe. In this case, we do not need the assumption for $\Gm$ in Assumption \ref{assum:goal_set_metric}. However, due to the asymptotic stability nature of the CLF-based method, which might result in an extremely slow convergence close to $\Gm$.
\end{rem}

\section{LTL Control Synthesis with Patching CLBFs}
\label{sec:LTL_with_CLBFs}
In this section, we consider an $\LTL$ specification $\varphi$ satisfying the conditions in Proposition \ref{prop:local_to_global_ltl} with a lasso-form trace $\sigma =\sigma_{\mathrm{pre}} (\sigma_{\mathrm{suf}})^{\omega}$. We then present the process of synthesizing the patching CLBF-based controller for the LTL specification $\varphi$ by sequentially switching the local CLBF-based controllers for the reach-avoid transitions $\{\Tm\}^{M-1}_{m=0}$.

For each transition $\Tm$, let $W$ be the patching CLBF in \eqref{eq:CLBF}.
We synthesize the bounded controller proposed in \cite{leyva2013global} for hyperrectangle input constraints, in the form $u^{\eta}(x)=\rho^{\eta}(x)\odot \omega(x)$, where $\omega$ is the vertex control and $\rho^{\eta}$ is a component-wise rescaling, and $\odot$ denotes component-wise multiplication.

For system \eqref{eq:sys} with $u \in \R^{p}$, let
\begin{equation}
    a_{W}(x)=L_{f}W(x)=\nabla W(x)\cdot f(x),
\end{equation}
and for each $j\in\{1,\ldots,p\}$,
\begin{equation}
    b_{W,j}(x)=-L_{g_j}W(x)=-\nabla W(x)\cdot g_{j}(x),
\end{equation}
where $\nabla W(x)= b\nabla h +(1-b)\nabla V + (h-V)\nabla b.$

Under box constraints $u_{j}\in[\underline{u}_{j}, \overline{u}_{j}]$, we set
\begin{equation}
    r_{j}(x)=
    \begin{cases}
        \overline{u}_{j},   & b_{W,j}(x)\ge 0, \\
        -\underline{u}_{j}, & b_{W,j}(x)<0,
    \end{cases}
\end{equation}
\begin{equation}
    \omega_{j}(x)=r_{j}(x)\,\mathrm{sign}(b_{W,j}(x)),
\end{equation}
\begin{equation}
    \beta(x)=\sum_{j=1}^{p}r_{j}(x)\lvert b_{W,j}(x)\rvert.
\end{equation}
Next, we define
\begin{equation}
    \lambda(x)=1-\frac{\lvert a_{W}(x)\rvert+a_{W}(x)}{2\beta(x)},
\end{equation}
\begin{equation}
    \phi_{j}(x)=\frac{r_{j}(x)\lvert b_{W,j}(x)\rvert}{\beta(x)},
\end{equation}
\begin{equation}
    \tau_{j}^{\eta}(x) =
    \begin{cases}
        p\dfrac{\ln(\lambda(x))}{\lambda(x)}-\eta r_{j}(x)\lvert b_{W,j}(x)\rvert, & \beta(x)>0, \\
        0,                                                                                & \beta(x)=0,
    \end{cases}
\end{equation}
where $\eta > 0$ is a positive constant.
Then the continuous rescaling function $\rho_{j}^{\eta}(x)$ is synthesized as:
\begin{equation}
    \rho_{j}^{\eta}(x) =
    \begin{cases}
        1 - (1 - \Gamma(x) \phi_{j}(x)) e^{\tau_{j}^{\eta}(x) \phi_{j}(x)}, & \text{if }\Pi_j(x) > 0; \\
        0,                          & \text{if }\Pi_j(x) = 0.
    \end{cases}
\end{equation}
where $\Gamma(x) = \frac{|a_{W}(x)| + a_{W}(x)}{2\beta(x)}$ and $\Pi_j(x) = r_{j}
(x) |b_{W,j}(x)|$. Consequently, the CLBF-based controller for the transition
$\Tm$ is given by
\begin{equation}
    \label{eq:clbf_control}
    u^{\eta}(x)=\left[u_{1}^{\eta}(x),\ldots
    ,u_{p}^{\eta}(x)\right]^{\top}
\end{equation}
with $u_{j}^{\eta}(x)=\rho_{j}^{\eta}(x )\omega_{j}(x)$ for each
$j\in\{1,\ldots,p\}$.

With the CLBF-based controllers for  transitions $\{\Tm\}^{M-1}_{m=0}$, we can sequentially patch them together to generate a switching control strategy for the LTL specification $\varphi$. \yt{For each transition
$\Tm$, let $\kappa_m(x) := u^{\eta,m}(x)$ where $x\in\Xm$ and $u^{\eta,m}$ is the CLBF-based controller synthesized by~\eqref{eq:clbf_control}. The global finite-memory control strategy is then defined as $\kappa(x,m)=u^{\eta,m}(x)$, $m\in\mathcal{M}$.} The synthesis process is:
\begin{itemize}
    \item \textbf{Step 1:} Decompose the LTL specification $\varphi$ with  a lasso-form word  $ \sigma=\sigma_{\mathrm{pre}}(\sigma_{\mathrm{suf}})^{\omega}$ into a sequence of transitions $\{\Tm\}_{m=0}^{M-1}$ as in Section \ref{sec:decomposition}.

    \item \textbf{Step 2:} For the first transition, compute the safe ROA estimate with a patching CLBF in Section \ref{sec:patch_clbf} and its controller $u^{\eta, 0}$ with \eqref{eq:clbf_control}. The corresponding 1-level set $\Wo_1$ provides an inner approximation of the winning set of $\varphi$, $\hwin(\varphi)$.

    \item \textbf{Step 3:} Repeat Step 2 for each transition $\Tm$ to construct a patching CLBF and obtain the corresponding approximated winning set $\Wm$ and the associated local controller \yt{$\kappa_m(x)=u^{\eta,m}(x)$, using \eqref{eq:clbf_control}.}

    \item \textbf{Step 4:}
    \rz{By Proposition~\ref{prop:local_to_global_ltl} and \eqref{eq:subset},} the global control synthesis is achieved by aggregating the local winning-set approximations $\Wm$ via set union. By establishing a switching strategy that applies the appropriate local controller within each respective region, we can synthesize correct-by-construction motion planning for all the points in $\hwin(\varphi)$. 
\end{itemize}
We summarize the proposed framework in Algorithm \ref{alg:patched_clbf_ltl}.

\begin{algorithm}[t]
\caption{Patching CLBF-Based Control under $\LTL$ Specification $\varphi$}
\label{alg:patched_clbf_ltl}
\KwIn{System~\eqref{eq:sys} and word $\sigma=\sigma_{\mathrm{pre}}(\sigma_{\mathrm{suf}})^\omega \models \varphi$}
\KwOut{Closed-loop trajectory $\phi(\cdot; x_0, u)$ and control law $u$}

$M_{\mathrm{pre}} \gets 0$ if $\sigma_{\mathrm{pre}}$ is empty, else $\len(\sigma_{\mathrm{pre}})$\;
$M_{\mathrm{suf}} \gets 0$ if $\sigma_{\mathrm{suf}}$ is empty, else $\len(\sigma_{\mathrm{suf}})$ \;
$M \gets M_{\mathrm{pre}} + M_{\mathrm{suf}}$\; 
\yt{
\tcp{Define the memory-index map $\zeta$;}
\SetKwFunction{MemoryIndex}{MemoryIndex}
\SetKwProg{Fn}{Function}{:}{}
\Fn{\MemoryIndex{$n$}}{
\uIf{$0 \leq n < M_{\mathrm{pre}}$}{
    \Return{$n$}\;
}
\Else{
    \Return{$M_{\mathrm{pre}} + ((n-M_{\mathrm{pre}})\bmod M_{\mathrm{suf}})$}\;
}
}
}
\BlankLine
\tcp{Offline synthesis phase:}
\yt{Construct transitions $\{\mathcal{T}_m\}_{m=0}^{M-1}$ that satisfy Assumption~\ref{assum:transition_wellposed}} \;
\For{$m=0$ \KwTo $M-1$}{
Construct the safe ROA estimate $\Wm$ for $\mathcal{T}_m$ using~\eqref{eq:CLBF} \;
\yt{\If{$m\geq 1$}{Verify that $\mathcal{G}_{m-1}$ satisfies~\eqref{eq:subset}\;}}
Store $\Wm$ and controller $u^{\eta,m}(\cdot)$\;
}

\BlankLine
\tcp{Online control phase:}
\yt{
$n \gets 0$\tcp*{stage counter;}
}
$x(0)\gets x_0$\;
\While{execution continues}{
$m \gets \MemoryIndex(n)$ \tcp*[l]{$m=\zeta(n)$;}
Apply $u(t)=u^{\eta, m}(x(t))$\;
Evolve $\dot{x}(t)=f(x(t))+g(x(t))u(t)$\;
\If{$x(t)\in \mathcal{G}_{m}$}{
    \yt{$n \gets n+1$ \;
    }
}
}
\end{algorithm}

\yt{
\begin{rem}
We would like to emphasize that the proposed motion planning framework can accommodate trajectory-tracking deviations, including those caused by moderate exogenous noise or disturbances, through rapid online replanning. When the current state deviates from the planned trajectory during tracking, the current state can be used as a new initial condition, and the control synthesis can be re-executed to generate a new valid path. The formal satisfaction guarantee is restricted to the nominal model and holds provided that the new initial state remains inside the certified local winning set $\hwint$ associated with the current local specification $\Tm$. Under this condition, the replanned trajectory is guaranteed to satisfy $\varphi$.
\end{rem}
}

\section{Case Studies}

In this section, we evaluate the efficacy of the proposed approach on two simulations and one real experiment with the Crazyflie quadrotor. The patching CLBF step is done with the LyZNet~\cite{liu2024lyznet} toolbox on a server node equipped with an Intel(R) Xeon(R) Gold 6326 CPU @ 2.90 GHz (32 cores). Comparisons were conducted on a server with $2\times$ Intel Xeon Silver 4114 CPUs and 128 GB RAM, with each run allocated 2 CPU threads.


Firstly, we consider the attitude tracking problem for a rigid spacecraft. The system dynamics are governed by Euler's rotational equations~\cite{krstic1999inverse}:
    \begin{equation}
        J\dot{x}= -x \times Jx + u \notag
    \end{equation}
    where $J \in \mathbb{R}^{3 \times 3}$ is the positive-definite, symmetric inertia matrix. For an asymmetric rigid body, we assume $J = \text{diag}(10, 15, 20)$. $x = [x_{1}, x_{2}, x_{3}]^{\top}$ denotes the angular velocities of the spacecraft in the body-fixed frame, and $u = [u_{1}, u_{2}, u_{3}]^{\top}$ denotes the input control torques. We consider the reach-avoid LTL specification as follows:
    \begin{equation} \label{eq:ltl_spec1}
    \varphi = \mathbf{F}(\mathrm{Green}) \land \mathbf{G}(\neg \mathrm{Red_1}) \land \mathbf{G}(\neg \mathrm{Red_2}),
    \end{equation}
which requires the system to eventually reach the green region and always avoid the two red regions. 

As a matter of fact, we can have a quadratic CLF locally for this system in the form of $V(x) = x^{\top} P x$ by linearizing the system at the equilibrium point \rz{and solving the continuous-time algebraic Riccati equation with $Q$ and $R$ as identity matrices of appropriate dimensions. This identical CLF configuration is maintained throughout the subsequent numerical examples}. For such systems with known CLFs, we only need to define clearly the barrier functions with respect to the obstacles, and then the patching CLBF can be constructed easily by the method in Section \ref{sec:patch_clbf}. As shown in Fig.~\ref{fig:spacecraft}, we have $h_1 = 1- ((x_{1} - 1)^2 + (x_{2} - 0.6)^2 + x_{3}^2 - 0.5^2)$ for $\mathrm{Red_1}$ (upper right) and $h_2 = 1 - ((x_{1} + 1)^2 + (x_{2} + 1)^2 + x_{3}^2 - 0.5^2)$ for $\mathrm{Red_2}$ (lower left), respectively. The goal set $\mathrm{Green}$ is defined as a ball region centered at the origin with a radius of 0.1. Then, we construct a smooth barrier function with $\tau = 5.2$, and consequently patch it with the CLF with $\varepsilon = 0.5$.
Sampling within the winning set, all 20 trajectories initialized from different initial conditions successfully realize the LTL specification with the computed controller.

    \begin{figure}[!h]
        \centering
    \includegraphics[width=0.95\linewidth]{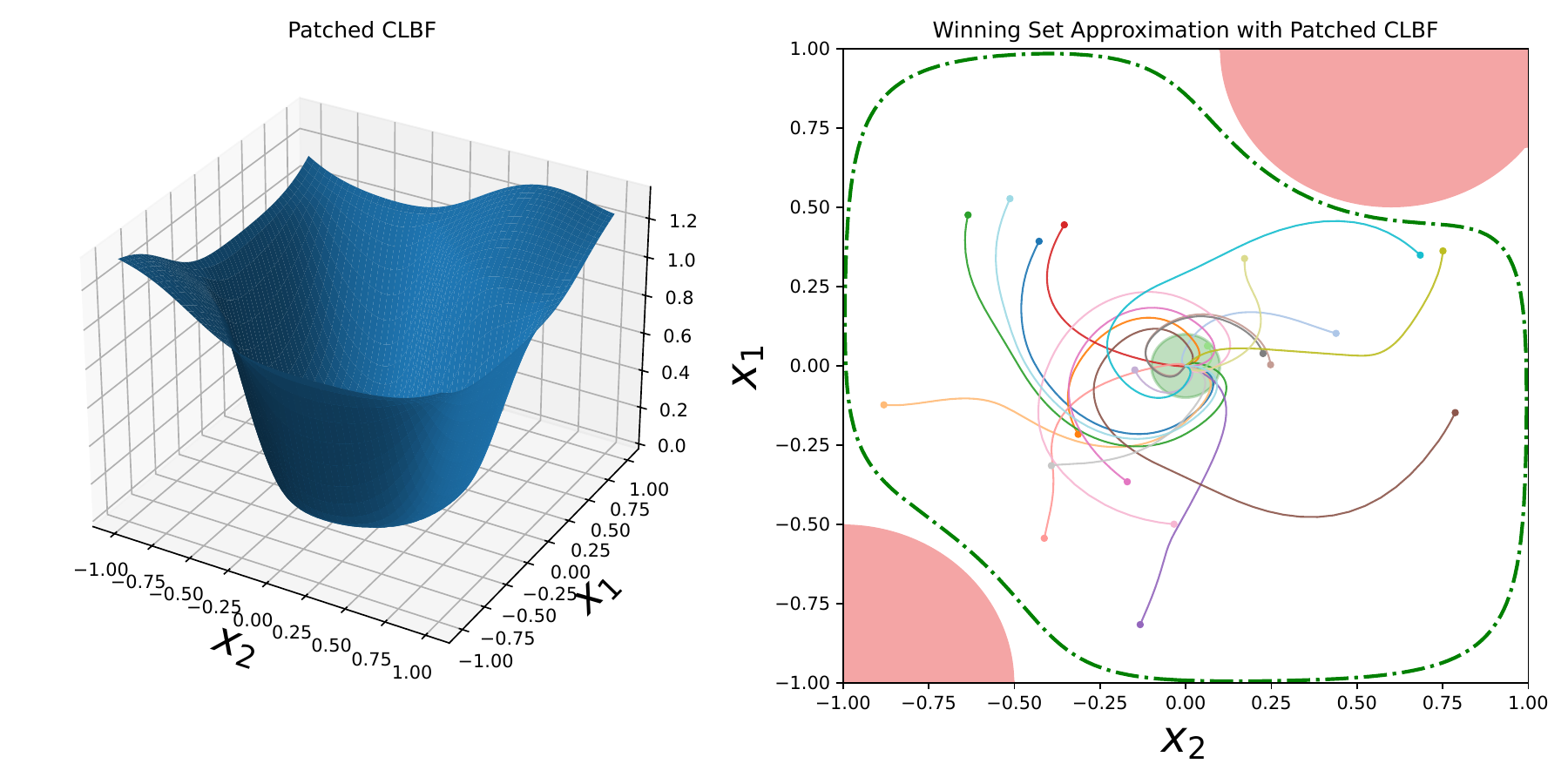
        }
        \caption{The certified CLBF for the LTL specification~\eqref{eq:ltl_spec1}. The dashed green curve approximates the winning set under bounded inputs $u \in [-3, 3]^3$, within which all the trajectories satisfying $\varphi$.}
        \label{fig:spacecraft}
    \end{figure}


Secondly, we consider an omnidirectional robot~\cite{das2024prescribed} with state 
$x=[x_1,x_2,x_3]^\top$ and control input 
$u=[v_1,v_2,\omega]^\top \in [-3, 3]^3$. The dynamics are given by
$$
    \begin{bmatrix}
        \dot{x}_{1} \\
        \dot{x}_{2} \\
        \dot{x}_{3}
    \end{bmatrix}
    =
    \begin{bmatrix}
        \cos x_{3} & -\sin x_{3} & 0 \\
        \sin x_{3} & \cos x_{3}  & 0 \\
        0          & 0           & 1
    \end{bmatrix}
    \begin{bmatrix}
        v_{1}  \\
        v_{2}  \\
        \omega
    \end{bmatrix}.
$$
As shown in Fig.~\ref{fig:omni_ltl_scene}, the robot needs to complete a cyclic transfer task in a gated workspace, where $\mathrm{gate}_3$ and $\mathrm{gate}_4$ open automatically upon robot proximity and are thus omitted from the LTL specification. Starting from the processing station $r_4$, the robot alternates between two routes: one through $r_5$, $r_2$, and $\mathrm{gate}_1$ to reach $r_1$, and the other through $r_3$, $r_7$, and $\mathrm{gate}_2$ to reach $r_6$. In each route, the robot picks up a key before passing through the corresponding gate and then returns to $r_4$. The outbound and return routes are denoted by $\mathfrak{y}_1,\mathfrak{y}_2$ and $\mathfrak{y}_1',\mathfrak{y}_2'$, respectively.
$$
\begin{aligned}
\mathfrak{y}_1 &= r_4 \land \mathbf{F}\big(r_5 \land \mathbf{F}(r_2 \land \mathbf{F}(\mathrm{gate}_1 \land \mathbf{F}r_1))\big),\\
\mathfrak{y}^{\prime}_1 &= r_1 \land \mathbf{F}\big(\mathrm{gate}_1 \land \mathbf{F}(r_2 \land \mathbf{F}(r_5 \land \mathbf{F}r_4))\big),\\
\mathfrak{y}_2 &= r_4 \land \mathbf{F}\big(r_3 \land \mathbf{F}(r_7 \land \mathbf{F}(\mathrm{gate}_2 \land \mathbf{F}r_6))\big),\\
\mathfrak{y}^{\prime}_2 &= r_6 \land \mathbf{F}\big(\mathrm{gate}_2 \land \mathbf{F}(r_7 \land \mathbf{F}(r_3 \land \mathbf{F}r_4))\big).
\end{aligned}
$$

The overall LTL specification requires the two transfer cycles to be repeatedly executed in an alternating manner:
$$
\varphi
= \mathbf{G}\left(\neg \mathrm{obs}\right) 
\land
\mathbf{G}\mathbf{F}
\left(
\mathfrak{y}_1
\land
\mathbf{F}\left(
\mathfrak{y}^{\prime}_1
\land
\mathbf{F}\left(
\mathfrak{y}_2
\land
\mathbf{F}\;\mathfrak{y}^{\prime}_2
\right)
\right)
\right),
$$
where $\mathrm{obs}=\bigvee^9_{i=1} \mathrm{obs}_i$. In this task, we need to decompose the overall LTL specification into $16$ transitions as described in Sec.~\ref{sec:decomposition}. Regarding the barrier functions, we use again a \emph{log-sum-exp} smooth approximation, and then construct the barrier functions for each $\Tm$ with $\tau = 4$ as well as $0.5$ safety margin for the obstacles to further ensure the robot's safety. As illustrated in Fig.~\ref{fig:omni_ltl_scene}, for the first subtask $\varphi_1=r_4 \wedge \mathrm{F}(r_5)$, we can compute the approximated winning set $\hwin(\varphi)$ as the curve in the purple dashed line. As an example, the orange dashed line is the approximated winning set $\widehat{\Win_{\sigma}}(\mathcal{T}_{10})$ for the subtask $\varphi_{10}=r_3 \wedge \mathrm{F}(r_7)$, which captures almost the whole winning set of this local specification. Consequently, for each subtask, we can immediately synthesize controllers for any states within the approximated winning set, which provides robustness and flexibility without solving QPs on the fly.

\begin{figure}[!h]
    \centering    
    \includegraphics[width=0.9\linewidth]{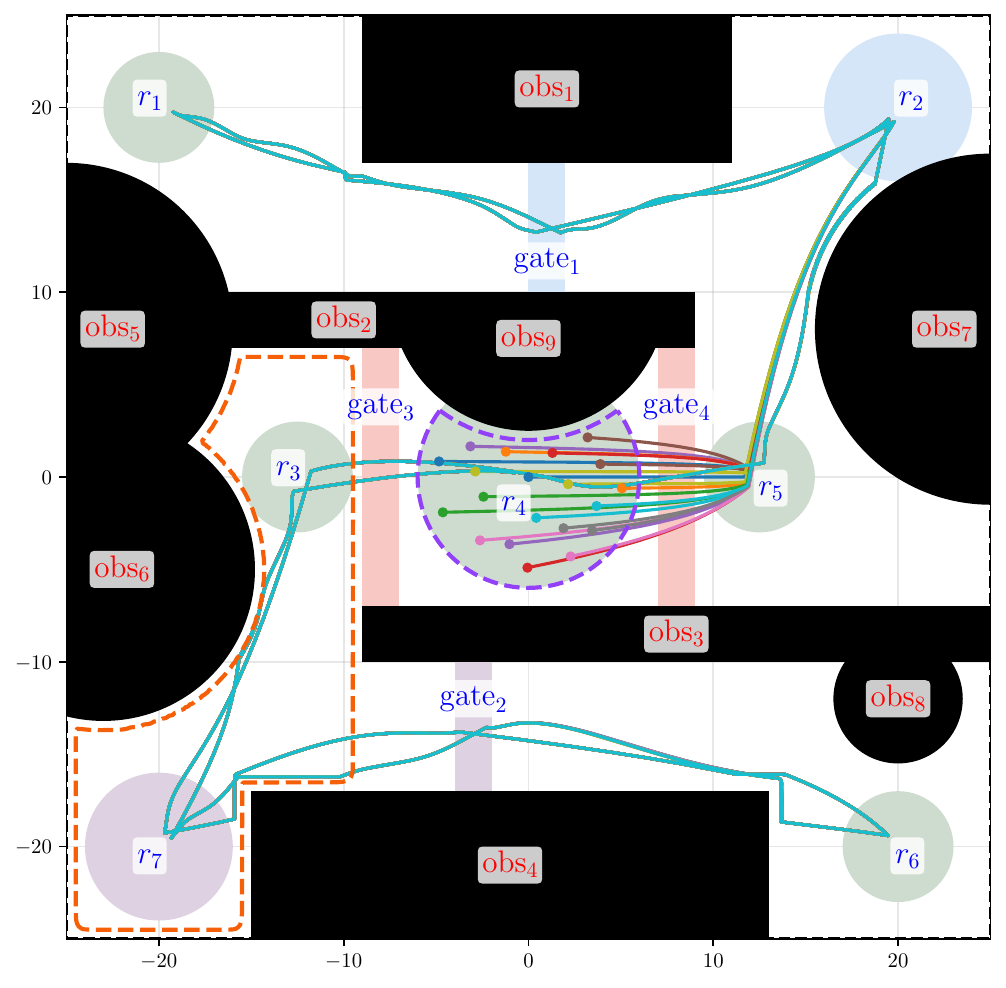}
     
     \caption{Twenty trajectories of the omnidirectional robot initialized from different states in a complex planar environment under the LTL specification using the proposed method, with $\hwin(\varphi)$ in purple dashed line. The controller is executed at $50~\mathrm{Hz}$.}
    \label{fig:omni_ltl_scene}
\end{figure}

\begin{table}[!h]
\centering
\caption{Comparison of trajectory performance and computational cost.
Values are reported as mean $\pm$ sample standard deviation over 20 trajectories.}
\renewcommand{\arraystretch}{2.3}
\setlength{\tabcolsep}{1pt}
\begin{tabular}{|c|c|c|c|}
\hline
Method & \makecell{CPU \\ runtime $\bar{T}(\mathrm{s})$} & \makecell{CPU runtime\\per step $\bar{T}_{\mathrm{step}} (\mathrm{ms})$} & \makecell{Control-energy cost\\per step $\bar{E}_{\mathrm{step}}$} \\
\hline
\hline
Patched CLBF & $\mathbf{23.984 \pm 0.129}$ & $\mathbf{0.632 \pm 0.002}$ & $\mathbf{0.366 \pm 0.007}$  \\
\hline
QP-CLF-CBF & $29.795 \pm 0.360$ & $2.60 \pm 0.01$  & $2.196 \pm 0.013$  \\
\hline
\end{tabular}
\label{tab:comparison}
\end{table}

With 20 different initial conditions sampled in the approximated winning set, we further compare the proposed controller against the standard CBF-QP baseline with the CLF-based controller via Sontag's formula. For each trajectory $i\in\{1,\ldots,20\}$, let $T_i$, $M_i$, and $u_i(k)$ denote the CPU runtime, the number of control steps, and the control input at the $k$-th step, respectively. Since $M_i$ varies with the initial condition, we average the per-trajectory metrics so that each trajectory contributes equally. Specifically, we compute the CPU runtime as $\bar{T}=\frac{1}{20}\sum_{i=1}^{20}T_i$, the CPU runtime per control step as $\bar{T}_{\mathrm{step}}=\frac{1}{20}\sum_{i=1}^{20}\frac{T_i}{M_i}$, and the control-energy cost per control step as $\bar{E}_{\mathrm{step}}=\frac{1}{20}\sum_{i=1}^{20}\frac{1}{M_i}\sum_{k=1}^{M_i}\|u_i(k)\|_2^2$. As shown in Table~\ref{tab:comparison}, the proposed controller achieves higher computational efficiency by avoiding QP solving at every control step, thereby reducing the overall computation time. The proposed controller also yields a lower control-energy cost.

Lastly, we test our method on a Crazyflie quadrotor for a planar task with the following LTL specification: $\varphi = a_0 \land \mathbf{F}a_1 \land \mathbf{G}\mathbf{F}a_2 \land \mathbf{G}\mathbf{F}a_3 \land \mathbf{G}(\neg \mathrm{obs})$ where $\mathrm{obs}=\vee_{i=1}^3 \mathrm{obs}_i$. We use the single integrator with $u \in [-0.3, 0.3]^2$ for planning with the proposed method, while the tracking is done with its default controller. This specification requires the trajectory to start from region $a_0$, avoid the three obstacle regions, eventually visit region $a_1$, and repeatedly visit regions $a_2$ and $a_3$. The simulation scenario in Webots\footnote{\url{https://www.cyberbotics.com/}} and resulting trajectory are shown in Fig.~\ref{fig:crazyflie_scene}, and the corresponding real experiment scenario and trajectory are shown in Fig.~\ref{fig:crazyflie_real_scene}. The results demonstrate the effectiveness of the proposed method in generating trajectories that satisfy complex temporal logic specifications while respecting control constraints. \yt{Although the simplified planning model used in the real experiment may introduce deviations, the recorded real-world trajectory follows the planned trajectory and completes the specified task sequence, demonstrating the practical trackability of the generated trajectory.}



\begin{figure}[!ht]
\centering    
\includegraphics[width=0.8\linewidth]{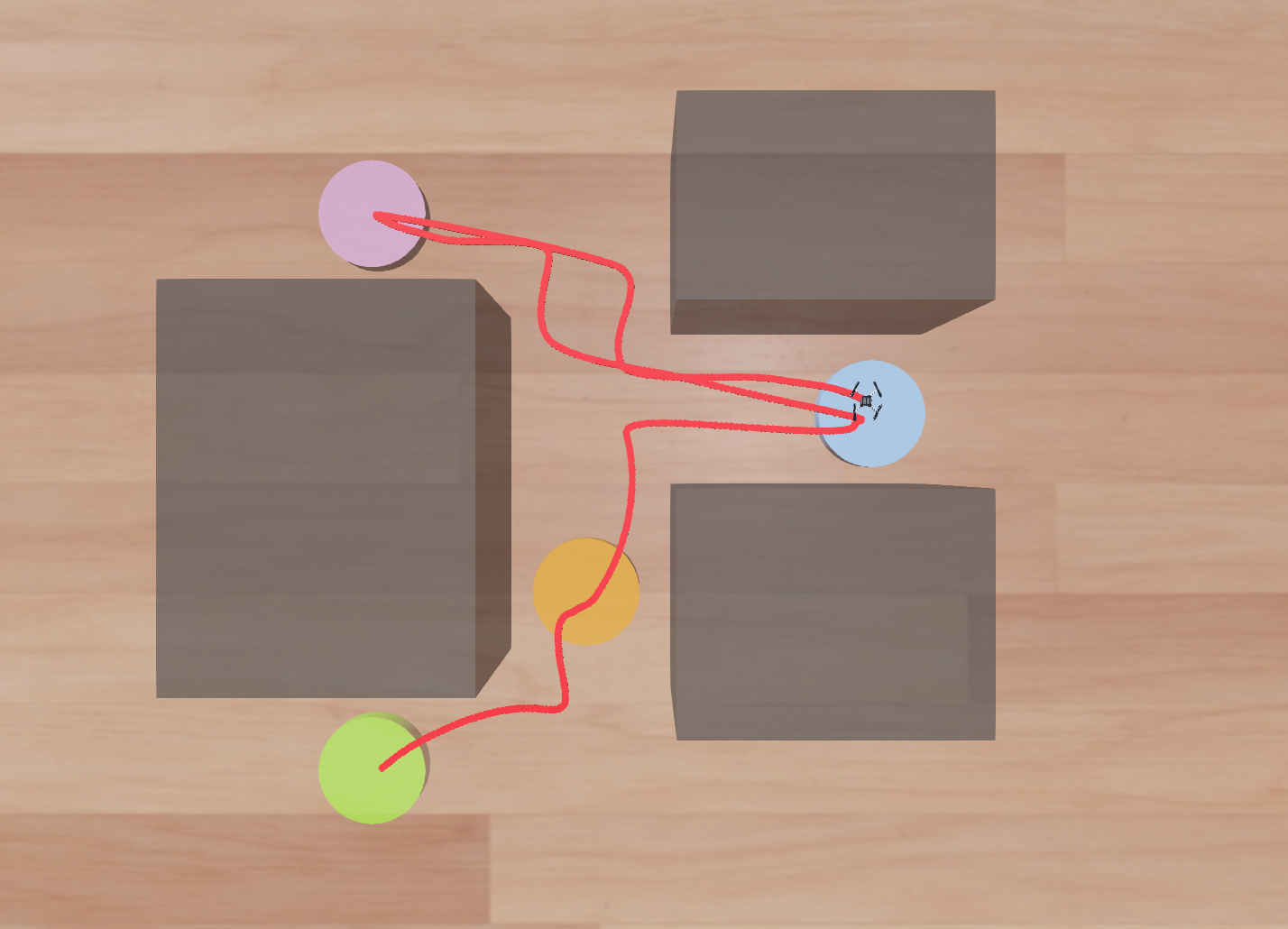}
\caption{Webots simulation scenario and executed trajectory for the Crazyflie quadrotor. The black, green, orange, blue, and pink regions correspond to obstacles, $a_0$, $a_1$, $a_2$, and $a_3$, respectively. 
The red curve denotes the executed trajectory.}
\label{fig:crazyflie_scene}
\end{figure}

\begin{figure}[!ht]
    \centering    \includegraphics[width=0.8\linewidth, trim=0 1.5cm 0 0, clip]{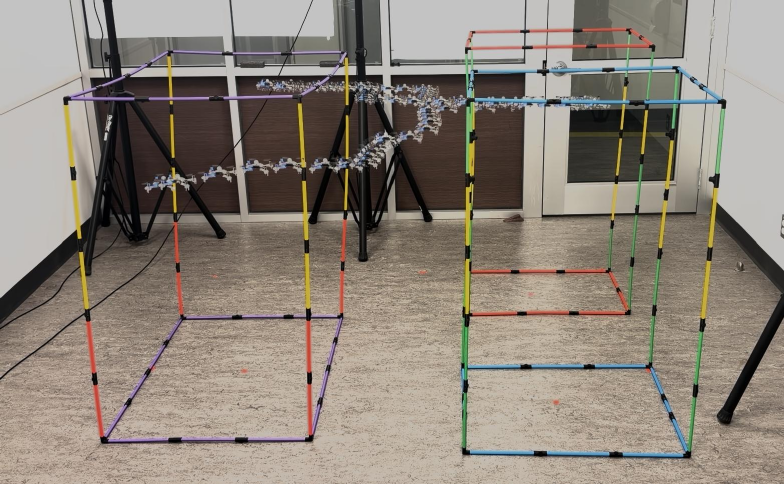}
     \caption{The real experiment and executed trajectory for the Crazyflie quadrotor, where the boxes are obstacles.}
    \label{fig:crazyflie_real_scene}
\end{figure}


\section{Conclusion and Discussions}
This work presents a patched-CLBF framework for synthesizing switching feedback controllers for continuous-state systems subject to $\LTL$ specifications and bounded inputs. By linking feasible temporal-logic trajectories to CLBF level sets, the method decomposes the synthesis problem into a sequence of safe stabilization tasks and enables efficient online execution. Simulations and a Crazyflie quadrotor experiment demonstrate the effectiveness of the approach. Future work will focus on reducing the conservativeness of the certificate construction and improving scalability to higher-dimensional systems, where certificate search and patch verification become more challenging.

Moreover, the proposed framework is sound but not complete: successful synthesis guarantees specification satisfaction, while failure does not preclude the existence of satisfying controllers. This incompleteness stems from the conservative, sufficient-condition-based structure of certificate verification. While recent work derives necessary-and-sufficient conditions for specific finite-trace formulations, these do not extend to general continuous-state $\LTL$ synthesis \cite{niu2023necessary}. By contrast, the proposed framework provides the formal guarantee that whenever the procedure succeeds, the synthesized closed-loop behavior satisfies the given specification \cite{wongpiromsarn2015automata}.
\bibliographystyle{unsrt}
\bibliography{CDC26}

\end{document}